\newtheorem{theorem}{Theorem}
\title{Joint Caching, Routing, and Channel Assignment for Collaborative Small-Cell Cellular Networks}
\author{\IEEEauthorblockN{Abdallah Khreishah$^\star$, Jacob Chakareski$^\dag$, and Ammar Gharaibeh$^\star$\\}
\IEEEauthorblockA{{$^\star$}New Jersey Institute of Technology; {$^\dag$}The University of Alabama, Tuscaloosa;}

\thanks{The work of Jacob Chakareski was partially supported by the NSF award CCF-1528030}
}
\begin{document}
\maketitle

\begin{abstract}
We consider joint caching, routing, and channel assignment for video delivery over coordinated small-cell cellular systems of the future Internet. We formulate the problem of maximizing the throughput of the system as a linear program in which the number of variables is very large. To address channel interference, our formulation incorporates the conflict graph that arises when wireless links interfere with each other due to simultaneous transmission. We utilize the column generation method to solve the problem by breaking it into a restricted master subproblem that involves a select subset of variables and a collection of pricing subproblems that select the new variable to be introduced into the restricted master problem, if that leads to a better objective function value. To control the complexity of the column generation optimization further, due to the exponential number of independent sets that arise from the conflict graph, we introduce an approximation algorithm that computes a solution that is within $\epsilon$ to optimality, at much lower complexity. Our framework demonstrates considerable gains in average transmission rate at which the video data can be delivered to the users, over the state-of-the-art Femtocaching system, of up to 46\%. These operational gains in system performance map to analogous gains in video application quality, thereby enhancing the user experience considerably.
\end{abstract}

\begin{keywords}
Collaborative small-cell cellular networks, joint caching, routing, and channel assignment, column generation, wireless video caching.
\end{keywords}

\section{Introduction}\label{sec:Introduction}
Cellular networks are increasingly serving as our primary Internet-access facility. Simultaneously,
our unceasing appetite for online video and the omnipresence of social networking applications have led to
video data traffic consuming the bulk of their bandwidth \cite{Cisco:14nourl}. Therefore, new traffic engineering approaches are needed that will stem the online video data deluge and enable more efficient resource
utilization. Caching or video traffic offloading in small-cell wireless systems has emerged as one such
recent approach. In particular, instead of forwarding the user requests for video content to the macro-cell
base station, over expensive and bandwidth-limited back-haul links, they are served locally from the
small-cell base station that maintains a cache of the most-popular video content. The main question to be
answered in this context is how to populate the caches of every small-cell, given their capacities and the video file popularity distribution in every small cell, knowing that some users may be served from multiple caches, as they move around. A typical caching scenario of this nature is illustrated in Figure~\ref{fig:SmallCellScenario}.

\begin{figure}[htb]
\centering
\includegraphics[width=0.8\columnwidth]{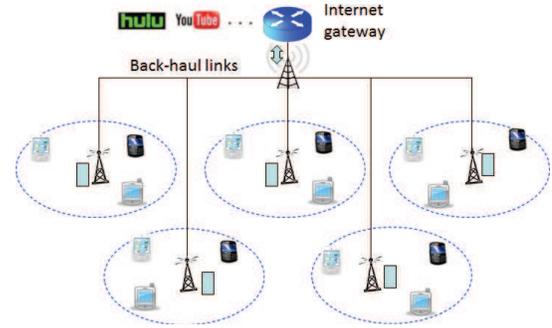}
\caption{Heterogeneous network caching: A collection of small-cells embedded into a macro-cell are served by small base-stations with caches installed. In case of a cache miss, the user requests are relayed to the macro-cell base station over back-haul links that forwards them to a remote Internet server.}
\label{fig:SmallCellScenario}
\end{figure}

Studies to date have examined the above setting from different perspectives. For example, in \cite{golrezaei2012femtocaching}, the authors formulated an information-theoretic objective of minimizing
the delivery cost of the requested video content in small-cell wireless networks with caching. In \cite{blasco2014learning}, the authors employed the multi-armed bandit
formulation to study the statistical risk of learning the content popularity distribution at a small-cell cache.
Three major shortcomings of the current state-of-the-art motivate our study. First, {\em collaboration} between
small-cell caches is not considered, i.e., the typical system architecture that is proposed is hierarchical in
nature, where in the event of a local cache miss, the user is redirected to the Internet via expensive and bandwidth-limited back-haul links\footnote{Frequently, these links are wireless in nature, and their use to relay users to the Internet via the macro-cell base station reduces the aggregate transmission capacity available in the cell. Thus, they should be used sparingly.}, illustrated in Figure~\ref{fig:SmallCellScenario}. As shown in our recent preliminary study \cite{khreishah2015collaborative}, cooperation between different small-cell caches in serving their respective user populations can dramatically reduce the back-haul cost of serving the requested content from a remote Internet server. Second, the issue of {\em interference} is typically considered orthogonal to the caching problem under consideration, i.e., it is assumed that back-haul communication and cache-to-user communication do not interfere with other links transmitting at the same time. Third, {\em cache misses} at different small-cells are handled uniformly, which does not account for the unequal impact that misses at different small-cell caches will have on the overall performance. Consider for instance the setup illustrated in Figure~\ref{fig:model}. A cache miss will require transmission from the macro-cell base station to the requesting user. If the user is located in a nearby small-cell, then the communication channel over which the content is delivered can be reused at other small-cells located further away from the macro-cell base station. However, if the user is located in a remote small-cell, the transmission from the macro-cell base station will effectively block this channel from being reused across the whole macro-cell. Thus, cache misses should be handled {\em differentially}, depending on their location relative to the macro-cell base station.

We address the above challenges by formulating a framework for joint caching, interference management, and routing in coordinated small-cell wireless systems. Our framework advances the state-of-the-art considerably, by dispensing with unrealistic system assumptions, introducing novel small-cell collaboration concepts, and augmenting the system's serving capacity. As observed from our experiments, such advances can lead to considerable gains in achieved user throughput that in turn will map to enhanced user experience in terms of delivered video application quality. In the remainder of this section, we review related work in greater detail.

Caching at the cellular level has been studied extensively. The work in \cite{erman2011cache} develops a simple cost model that cellular network operators can use to determine the benefits of caching at a base station. The authors in \cite{ahlehagh2012video} study video caching in base stations to improve the users' quality of experience (QoE), among other factors. The work in \cite{khreishah2015collaborative} considers collaborative caching between base stations in different cells to minimize the aggregate cost of file delivery. However, all these studies only consider caching at the macro-cell level, where back-haul links can be used to carry the inter-base-station traffic, and no small-cells are considered. It should be emphasized that back-haul links are expensive to install and maintain, and their use leads to reduction in aggregate transmission capacity across a macro-cell, if they are wireless in nature. Thus, there is a system-efficiency cost associated with their use, as well.

The work in \cite{blasco2014learning} only considers the caching assignment problem and aims to minimize the total delay of file retrieval by estimating the file popularity distribution at a single small-cell base station. The authors in \cite{bastug2014cache} tackle the caching assignment problem in small-cell base stations by characterizing the considered performance metrics (outage probability and average data delivery rate) using system parameters (location of the macro base station, file popularity, etc.). The work in \cite{ostovari2013cache} investigates network coding and small-cell caching for enhanced network performance. In \cite{karamchandani2014hierarchical}, the authors formulate an information-theoretic objective of minimizing the delivery cost of the requested video file in small-cell wireless networks with caching. These studies only focus on the caching assignment problem and do not consider the routing and channel assignment problems that arise in such environments.

The most related studies to ours are \cite{golrezaei2012femtocaching,poularakis2014approximation,jain2005impact}. The work in \cite{jain2005impact} considers routing and channel assignment between nodes in a wireless network by using conflict graphs in order to maximize the network throughput. However, this work does not consider caching. The work in \cite{golrezaei2012femtocaching} was the first to propose caching at small-cell base stations. The authors exploit caching at small-cell base stations and perform link scheduling in order to maximize the number of satisfied users. However, caching and link scheduling are considered separately. The work in \cite{poularakis2014approximation} also considers caching at small-cell base stations by taking into account the transmission bandwidth assigned to every small-cell. However, this study does not account for the interference issues that arise when the users are served the requested content. As stated earlier, the present paper is the first to consider jointly caching, routing, and channel assignment in collaborative small-cell cellular networks.

The rest of the paper is organized as follows. Next, we introduce our system architecture and models we use. The problem formulation is presented in Section~\ref{sec:formulation}. The $(1\pm\epsilon)$ approximation algorithm, based on column generation, that computes the optimal solution is presented in Section~\ref{sec:column}. Various implementation aspects of our system are discussed in Section~\ref{sec:practical}. We carry out a complexity analysis of the optimization in Section~\ref{sec:performance}. Performance analysis of our system is carried out in Section~\ref{sec:simulation}. Finally, concluding remarks are provided in Section~\ref{sec:conclusion}.

\section{System Model} \label{sec:model}
\subsection{Network Model}
We consider a single macro-cell of a cellular network, comprising a macro base station (MBS), located at the center of the cell, and $\mathcal{N} = \{1, 2, \ldots, n, \ldots, N\}$ small base stations (SBS), scattered across the macro-cell, as illustrated in Figure~\ref{fig:model}. Let $\mathcal{N} = \mathcal{N} \cup \{MBS\}$. The $n$-th SBS is equipped with storage capacity of size $Cap_n$ and $a_n$ antennae that can be used for communication. We assume that the MBS has enough storage capacity to store all files, so that a requested file can always be served. There are $\mathcal{J} = \{1, 2, \ldots, j, \ldots, J\}$ files that can be requested by $\mathcal{K} = \{1, 2, \ldots, k, \ldots, K\}$ users, where the $k$-th user has $a_k$ antennae that can be used for communication. Let $\alpha_{kj}$ denote the number of requests per time slot generated by the $k$-th user requesting the $j$-th file. Let $S_j$ denote the size of the $j$-th file. The base stations and the users can communicate using a set $\mathcal{C} = \{1, 2, \ldots, c, \ldots C\}$ of available secondary channels with different bandwidths, where each SBS can use a subset of the secondary channels. The macro BS can use all of the secondary channels in addition to the basic channel denoted by $\{0\}$. Thus, let $\mathcal{C} = \mathcal{C} \cup \{0\}$. We assume a single hop communication where the $k$-th user can retrieve a file from the $n$-th SBS only if the $k$-th user is within the transmission range of the $n$-th SBS. Since we only consider the downlink session, we will use the terms SBS and transmitter interchangeably, as well as the terms user and receiver. 

\begin{figure}
\centering
\includegraphics[width=\columnwidth]{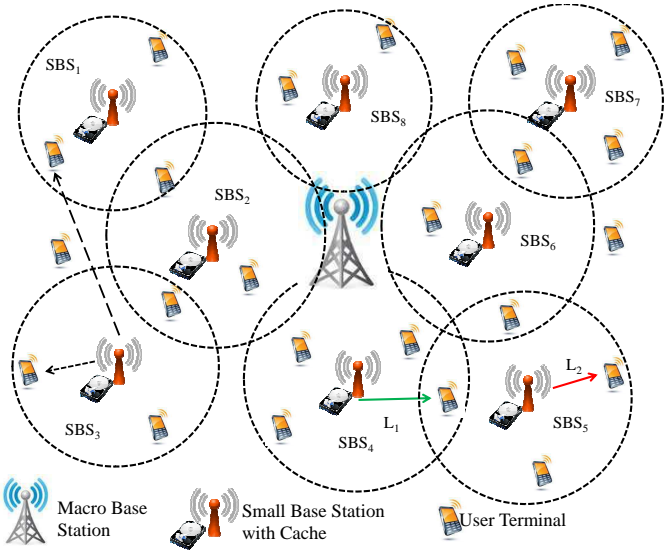}
\caption{Heterogeneous macro-cell and small-cell cellular network model.}
\label{fig:model}
\end{figure}

\subsection{Network Coding}
Network coding improves throughput and lowers scheduling complexity, by enabling efficient packet transmission and polynomial-time optimization solutions \cite{ho2008network}. We consider intra-session network coding has been applied to the data packets of each video file. In particular, if a file comprises packets $\{q_1, q_2,\ldots,q_M\}$, we generate coded packets $\mathcal{Q} = \sum_{m = 1}^{M}\kappa_m q_m$, where $\kappa_m, \forall m$ are random coefficients and the arithmetic operations are performed over a finite field \cite{ho2006random}. This will ensure that any $M$ coded packets will be linearly independent with very high probability, and the user can reconstruct the file by receiving $M$ different coded packets.

\subsection{Transmission/Interference Range and Link Capacity}
We assume that the transmission and interference ranges of each base station, when transmitting on a certain channel, are fixed. This can be done by fixing the transmission power $P_n^c$ of the $n$-th SBS when communicating using the $c$-th channel. We use the Protocol Interference Model \cite{gupta2000capacity}, where the transmission from the $n$-th SBS to the $k$-th user using the $c$-th channel is successful if the received power is higher than a threshold $P_T^c$, and the interfernece is significant if the received power is higher than a threshold $P_I^c$. Therefore, the transmission range ${TR}_{n}^{c}$ and interference range ${IR}_{n}^{c}$ of the $n$-th SBS can be calculated as follows:
\begin{displaymath}
{TR}_{n}^{c} = (g P_n^c/P_T^c)^{1/\gamma}, \quad {IR}_{n}^{c} = (g P_n^c/P_I^c)^{1/\gamma}
\end{displaymath}
where $\gamma$ is the path loss factor and $g$ is a constant related to the wavelength, the antenna profiles at the transmitter and receiver, and other factors.

The capacity $\hat{c}_{nk}^{c}$ of the link between the $n$-th SBS and the $k$-th user when communicating using the $c$-th channel is calculated according to the Shannon-Hartley theorem as follows
\begin{displaymath}
\hat{c}_{nk}^{c} = W_c \log_2(1 + \frac{G_{nk}P_n^c}{\eta})
\end{displaymath}
where $W_c$ is the bandwidth of the $c$-th channel, $\eta$ is the white noise at the receiver, and $G_{nk} = g\cdot(d_{nk})^{-\gamma}$ is the power propagation gain between the $n$-th SBS and the $k$-th user when the Euclidean distance is $d_{nk}$. Note that due to the Protocol Interference Model, an SBS cannot transmit to a user on a channel if the user falls inside the interference range of another SBS using the same channel. Thus the capacity of the link is calculated using Signal-to-Noise ratio (SNR) instead of Signal to Interference plus Noise ratio (SINR).

\subsection{Conflict Graph and Independent Sets}
In order to characterize the interference among the communication links in our system, we construct a conflict graph. A conflict graph $G(V, E)$ is a graph where every vertex in the vertex set $V$ represents a communication link tuple $((n,k),c)$, where $n \in \mathcal{N}, k \in \mathcal{K}, c \in \mathcal{C}$. An edge connecting two vertices in the conflict graph means that the corresponding communication links cannot be active at the same time. This happens if the receiver in one tuple falls within the interference range of the transmitter in the other tuple when communicating on the same channel. A transmitter uses the same channel to send to multiple receivers, or a receiver receives multiple signals on the same channel.

An independent set $I$ is a set containing communication tuples that do not interfere with each other (i.e., there is no edge connecting any two corresponding verices in the conflict graph). Since all the communication links in an independent set do not interfer with each other, these links can be active and used for communication simultaneously. If adding another link to the independent set $I$ makes it non-independent, then $I$ is defined as a maximal independent set. We represent the independent set $I$ by a $0-1$ vector of length equal to the number of communication tuples, where a value of $1$ in $I$ indicates that the corresponding communication tuple is included in $I$.

Figure~\ref{fig:IS} illustrates an example on how to construct the conflict graph and the independent sets. In this example, we have two SBSs and three users. SBS 1 and user 2 can communicate using two channels $1$ and $2$, while all other SBSs and users can communicate using channel $1$ only. The arrows indicate the available communication tuples. These tuples are represented by the vertices in the conflict graph. An edge connecting two vertices means that the corresponding communication tuples cannot be active at the same time. For example, since there is an edge connecting $v_1$ and $v_4$, then the tuples $((1,1),1)$ and $((2,2),1)$ cannot be active at the same time, because when SBS 1 transmits to user 1 on channel $1$, user 2 will suffer from interference when recieving on channel $1$ from SBS 2. From this conflict graph, we can construct the desired independent sets. For example, $I_1 = (1,0,0,0,1)$ is an independent set and $I_2 = (1,0,1,0,1)$ is a maximal independent set since adding any new vertex to $I_2$ will make it non-independent.

\begin{figure}
\centering
\includegraphics[width=\columnwidth]{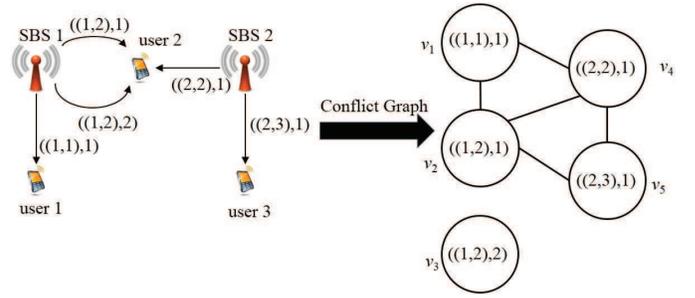}
\caption{Conflict graph construction example.}
\label{fig:IS}
\end{figure}

\section{Problem Formulation} \label{sec:formulation}
In this section, we formulate the problem of jointly caching, routing, and scheduling for cellular small cell networks.

\subsection{Formulation}
Before we state our problem formulation, we introduce the following variables:
\begin{itemize}
\item $X_{nj}$: The fraction of the $j$-th file stored at the $n$-th SBS.

\item $Y_{nj}^k$: The fraction of the $j$-th file requested by the $k$-th user and served from the $n$-th SBS.

\item $Z_{nk}^c$: Total data transfered from the $n$-th SBS to the $k$-th user using the $c$-th channel.

\item $f_i$: The percentage of time the $i$-th independent set $I_i$ is active.

\item $\mathcal{I}$: The set of all independent sets
\end{itemize}
The problem formulation of jointly caching, routing, and scheduling for cellular small-cell networks is formulated using the following linear program (LP):
\begin{displaymath} \delta^* = \min \sum_{1 \leq i \leq \left\vert{\mathcal{I}}\right\vert} f_i \label{eqn:basic_formulation}
\end{displaymath}

\begin{align}
&\text{s.t.~} \sum_{j}S_{j}X_{nj} \leq Cap_{n} \quad &\forall n \label{const1}\\
& \sum_{n} Y_{nj}^{k} \geq 1_{\{\alpha_{kj} > 0\}} \quad &\forall k, j \label{const2}\\
& Y_{nj}^{k} \leq X_{nj} \quad &\forall n, k, j \label{const3}\\
& \sum_{j} Y_{nj}^{k}\alpha_{kj}S_{j} \leq \sum_{c}Z_{nk}^{c} \quad &\forall n, k \label{const4}\\
& Z_{nk}^{c} \leq \sum_{1 \leq i \leq \left\vert{\mathcal{I}}\right\vert}f_i\hat{c}_{nk}^{c}(I_i) \quad &\forall n, k, c \label{const5}\\
& \sum_{1 \leq i \leq \left\vert{\mathcal{I}}\right\vert} f_i \leq 1 \label{const6}
\end{align}

In this problem, the objective is to minimize the schedule length required to satisfy all users\footnote{That is, the amount of time required to serve all user requests.}. Constraint \eqref{const1} guarantees the cache capacity constraints of the $n$-th SBS. Constraint \eqref{const2} states that the sum of the fractions of a file received by a user is at least 1. Constraint \eqref{const3} states that a user can retrieve a file from an SBS only if that SBS has the file in its cache. Constraints \eqref{const4} and \eqref{const5} together state that the total size of data transmitted from the $n$-th SBS to the $k$-th user on the $c$-th channel should be less or equal to the capacity $\hat{c}_{nk}^{c}$ of the communication tuple $((n,k),c)$ times the fraction of time the independent sets containing the tuple $((n,k),c)$ are active. The last constraint states that a feasible schedule length should be less or equal to 1.

The above formulation can be adjusted to include other objectives. If the objective is to maximize the total throughput, then the objective function is changed to
\begin{displaymath}
\max \sum_n \sum_k \sum_c Z_{nk}^{c} \, .
\end{displaymath}

\section{An $\epsilon$-Bounded Approximation Algorithm Based on Column Generation} \label{sec:column}
The challenge posed by the problem formulation presented in Section \ref{sec:formulation} is that we may have a large number of independent sets, based on the conflict graph, and therefore, a large number of variables to deal with. However, most of these variables will not contribute to the objective function as their value will be zero \cite{bertsimas1997introduction}. Column generation takes advantage of this observation and only selects the variables that have the potential to improve the objective function. Based on this, we divide the original optimization problem into two subproblems called the Restricted Master problem (RMP) and the Pricing problem (PP). The RMP starts by working on an initial subset of variables. The PP uses the optimal dual solution to the RMP to identify a new variable with the most negative reduced cost, relative to the objective function of the RMP. This variable is then added back to the RMP, and the process is repeated until an $\epsilon$-bounded solution is obtained. The process of column generation is shown in Figure~\ref{fig:CG}.

\begin{figure}[h]
\centering
\includegraphics[width=0.8\columnwidth]{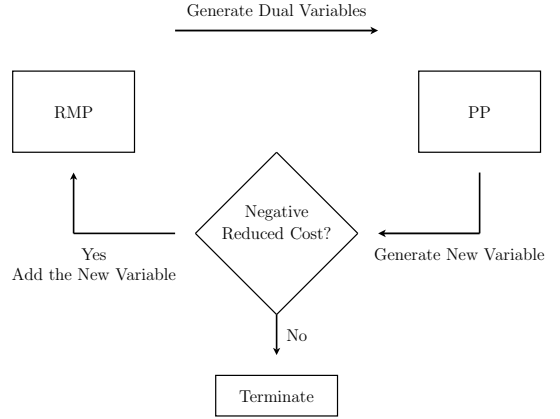}
\caption{The column generation process.}
\label{fig:CG}
\end{figure}

\subsection{The Restricted Master Problem}
As stated above, the RMP starts by considering an initial subset $\mathcal{I}' \subset \mathcal{I}$ of independent sets, which can be obtained by including a single communication tuple in each independent set $I\in \mathcal{I}'$ (i.e. $\mathcal{I}'$ is similar to the identity matrix). Therefore, the formulation of the RMP is adjusted as follows:
\begin{displaymath} \delta = \min \sum_{1 \leq i \leq \left\vert{\mathcal{I}'}\right\vert} f_i \label{eqn:RMP_basic_formulation}
\end{displaymath}
subject to \eqref{const1}, \eqref{const2}, \eqref{const3}, \eqref{const4} and
\begin{align}
& Z_{nk}^{c} \leq \sum_{1 \leq i \leq \left\vert{\mathcal{I}'}\right\vert}f_i\hat{c}_{nk}^{c}(I_i) \quad \forall n, k, c \label{const7}\\
& \sum_{1 \leq i \leq \left\vert{\mathcal{I}'}\right\vert} f_i \leq 1 \label{const8}
\end{align}

Since RMP is a small-scale linear program, it can be solved to optimality in polynomial time \cite{bertsimas1997introduction}, and we can obtain its dual optimal solution. However, since RMP considers a subset $\mathcal{I}'$ of all independent sets, the optimal solution to RMP is an upper bound on the optimal solution of the original problem. This upper bound is decreased when more independent sets are included in the RMP. Thus, we use the PP to determine which independent set, which will introduce a new variable $f_i$, has the potential to reduce the objective function of the RMP the most. This process continues until we get close enough to the optimal solution of the original problem.

\subsection{The Pricing Problem}
The goal of the PP is to generate an independent set $I_i \in \mathcal{I}/\mathcal{I}'$ that can reduce the cost of the objective function of the RMP. The reduced cost of an independent set $I_i \in \mathcal{I}/\mathcal{I}'$ is calculated as \cite{bertsimas1997introduction}:
\begin{displaymath}
\omega_i = 1 - \sum_{n,k,c} \lambda_{nk}^{c}\hat{c}_{nk}^{c}t_{nk}^{c}
\end{displaymath}
where $\lambda_{nk}^{c}$ are the dual variables corresponding to \eqref{const7}, and $t_{nk}^{c}$ is a binary variable indicating whether the communication tuple $((n,k),c)$ is included in the $i$-th independent set or not. Since we need to find the independent set with the most negative reduced cost, the objective function of PP is:
\begin{equation}
\min_{I_i \in \mathcal{I}/\mathcal{I}'} \omega_i \label{eqn:PP_basic_formulation1}
\end{equation}
or equivalently
\begin{equation}
\max_{I_i \in \mathcal{I}/\mathcal{I}'} \beta_i = \sum_{n,k,c} \lambda_{nk}^{c}\hat{c}_{nk}^{c}t_{nk}^{c} \label{eqn:PP_basic_formulation2}
\end{equation}
Let $u_{i}^{*}$ and $\beta_{i}^{*}$ denote the optimal solution of \eqref{eqn:PP_basic_formulation1} and \eqref{eqn:PP_basic_formulation2} respectively. Since we are seeking the most negative reduced cost, the process of column generation terminates when $u_{i}^{*} \geq 0$ or $\beta_{i}^{*} \leq 1$.

The generated independent set using PP must be a feasible set (i.e. there is no interference between any two communication tuples included in the independent set). To ensure this, the following set of constraints are imposed on PP
\begin{align}
&\sum_k t_{nk}^{c} \leq 1 \quad &\forall n,c \label{const9}\\
&\sum_n t_{nk}^{c} \leq 1 \quad &\forall k,c \label{const10}\\
&\sum_{k,c} t_{nk}^{c} \leq a_n \quad &\forall n \label{const11}\\
&\sum_{n,c} t_{nk}^{c} \leq a_k \quad &\forall k \label{const12}\\
&t_{nk}^{c} + \sum_{\substack{n'\neq n\vert k\in \mathcal{F}_{n'}\\ k'\neq k}}t_{n'k'}^{c} \leq 1 \quad &\forall n,k,c \label{const13}\\
&t_{nk}^{c} \in \{0, 1\} \quad &\forall n,k,c \label{const14}
\end{align}
where \eqref{const9} states that a transmitter cannot send on more than one link using the same channel, \eqref{const10} states that a receiver cannot receive on more than one link using the same channel, \eqref{const11} states that the number of links used by a transmitter should be less or equal than the number of antennae it has, \eqref{const12} states that the number of links used by a receiver should be less or equal to the number of antennae it has, \eqref{const13} states that if the tuple $((n,k),c)$ is active, then any other tuple $((n',k'),c)$ where $k$ is in the interference range of the $n'$-th transmitter (i.e. $k \in \mathcal{F}_{n'}$, where $\mathcal{F}_{n'}$ is the set of receivers that fall in the interference range of $n'$) cannot be active at the same time, as this will introduce interference at the $k$-th receiver, and \eqref{const14} indicates the binary nature of the variable $t_{nk}^{c}$.

\subsection{$\epsilon$-Bounded Algorithm}
As mentioned before, the number of independent sets can be very large. In fact, the number of independent sets is exponential in terms of the number of communication links in the network. Therefore, the number of iterations needed to find all independent sets that has a negative reduced cost may be very large. However, it has been observed that a very close solution to the optimal solution can be found quickly \cite{lasdon2013optimization}. We take advantage of this observation to introduce an $\epsilon$-bounded approximation algorithm to find an $\epsilon$-bounded solution. A solution $\delta$ to the original problem is said to be an $\epsilon$-bounded solution if it satisfies $(1-\epsilon)\delta^* \leq \delta \leq (1+\epsilon)\delta^*$, where $\delta^*$ is the optimal solution to the original problem stated in \eqref{eqn:basic_formulation} in Section~\ref{sec:formulation}. The $\epsilon$-bounded approximation algorithm is introduced in Algorithm~\ref{alg:alg1}. Based on this definition, we introduce the following theorem

\begin{theorem}
Let $\delta^u, \delta^l$ computed in Algorithm~\ref{alg:alg1} denote the upper bound and the lower bound solutions on the optimal solution $\delta^*$ of the original problem. When the algorithm terminates, an $\epsilon$-bounded solution is obtained
\end{theorem}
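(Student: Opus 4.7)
The plan is to verify the two invariants $\delta^l \leq \delta^* \leq \delta^u$ that the algorithm maintains, and then show that its stopping rule forces the gap $\delta^u - \delta^l$ to be small enough that the returned solution satisfies $(1-\epsilon)\delta^* \leq \delta \leq (1+\epsilon)\delta^*$.

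The upper-bound invariant $\delta^u \geq \delta^*$ is immediate. The RMP optimizes the same objective over a subset of feasible points of the master LP~\eqref{eqn:basic_formulation}: any RMP solution, extended by setting $f_i = 0$ for $i \notin \mathcal{I}'$, is feasible for the full problem and has the same objective value. Hence the RMP optimum cannot be strictly smaller than the full-LP optimum.

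The lower-bound invariant $\delta^l \leq \delta^*$ is the key step and uses LP duality. Let $\lambda_{nk}^c$ and $\mu$ denote the optimal dual variables of the RMP constraints~\eqref{const7} and~\eqref{const8}, respectively; by strong duality the RMP dual objective equals $\delta^u$. The full-LP dual constraint associated with a column $I_i \in \mathcal{I} \setminus \mathcal{I}'$ reads $\mu \geq \beta_i - 1$, where $\beta_i = \sum_{n,k,c}\lambda_{nk}^c\hat{c}_{nk}^c(I_i)$. Letting $\beta^* = \max_{I_i \in \mathcal{I}}\beta_i$ denote the PP optimum, the RMP dual can violate full-LP dual feasibility by at most $\beta^* - 1 - \mu$. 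Increasing $\mu$ by exactly this quantity restores feasibility (the remaining dual constraints, which do not involve $\mu$, are unaffected) and decreases the dual objective by the same amount, because $\mu$ enters the dual objective with coefficient $-1$ (it is the dual of a $\leq$ constraint with right-hand side $1$). Weak duality then yields $\delta^* \geq \delta^u - (\beta^* - 1 - \mu) =: \delta^l$.

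With both invariants in hand, the conclusion is a short calculation. The algorithm terminates once the relative gap $(\delta^u - \delta^l)/\delta^l$ drops below $\epsilon$, equivalently when $\delta^u \leq (1+\epsilon)\delta^l$. Combining this with $\delta^l \leq \delta^*$ gives $\delta^u - \delta^* \leq \delta^u - \delta^l \leq \epsilon\,\delta^l \leq \epsilon\,\delta^*$, so $\delta^u \leq (1+\epsilon)\delta^*$; the matching bound $\delta^u \geq (1-\epsilon)\delta^*$ is automatic from $\delta^u \geq \delta^*$. Returning $\delta = \delta^u$ therefore yields an $\epsilon$-bounded solution. The main obstacle will be the lower-bound step: producing $\delta^l$ cleanly requires identifying the correct dual variable to perturb, checking that no other full-LP dual constraint is broken by the perturbation, and tying the resulting dual value back to $\delta^u$ via strong duality. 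Everything else reduces to bookkeeping.
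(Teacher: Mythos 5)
Your proof is correct in substance and ends with essentially the same arithmetic as the paper, but it is more self-contained on the key lemma. The paper's own proof is only the two-case termination argument: if the loop exits because $\beta_i^* \le 1$, no column has negative reduced cost, so the RMP value is already optimal; otherwise $\delta^l/\delta^u \ge 1/(1+\epsilon)$ and the sandwich follows from the invariants $\delta^l \le \delta^* \le \delta^u$, which the paper does not prove but imports from the column-generation literature \cite{bertsimas1997introduction} (the bound $\delta^l = \max\{\delta^u + \Phi\omega_i^*, 0\}$ with $\Phi = 1$ justified by \eqref{const6}). You instead prove the lower-bound invariant from first principles by restoring dual feasibility: take the RMP optimal dual, observe that only the dual constraints of the missing columns $f_i$ (the only constraints containing $\mu$, the dual of \eqref{const8}) can be violated, raise $\mu$ enough to satisfy them, and invoke weak duality. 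That argument is sound, and choosing $\delta = \delta^u$ makes the lower side of the sandwich trivial, which is cleaner than the paper's symmetric treatment of $\delta^l$ and $\delta^u$.

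Two small mismatches should be fixed. First, your $\delta^l = \delta^u - (\beta_i^*-1-\mu)$ is not the quantity Algorithm~\ref{alg:alg1} computes, which is $\delta^l = \max\{\delta^u + (1-\beta_i^*), 0\}$ with no $\mu$ term; since $\mu \ge 0$ your bound is tighter, so it does imply that the algorithm's $\delta^l$ is a valid lower bound, but you should state that implication explicitly, and you should phrase the perturbation as increasing $\mu$ by $\max\{\beta_i^*-1-\mu,0\}$, since the inequality $\delta^* \ge \delta^u-(\beta_i^*-1-\mu)$ as written is false when there is no violation. Second, the while-loop also exits when $\beta_i^* \le 1$, a case your proof does not address; it is immediate (no remaining column has negative reduced cost, hence $\delta^u = \delta^*$) and is exactly the paper's first case, but since the theorem asserts a guarantee at every termination, that branch needs to be covered.
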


\begin{proof}
The algorithm terminates either when $\beta_i \leq 1$, in which case, there is no independent set that has a negative reduced cost, and thus we have reached the optimal solution, or when $\frac{\delta^l}{\delta^u} \geq \frac{1}{1+\epsilon}$, in which case we have $\delta \leq \delta^u \leq (1+\epsilon)\delta^l \leq (1+\epsilon)\delta^*$ and $\delta \geq \delta^l \geq (1-\epsilon)\delta^u \geq (1-\epsilon)\delta^*$. Therefore, $\delta$ is an $\epsilon$-bounded solution.
\end{proof}

To obtain $\delta^u$ and $\delta^l$, note that the solution to the RMP is an upper bound to the solution of the original problem. Therefore, $\delta^u$ can be set as the solution to RMP. On the other hand, the lower bound $\delta^l$ can be computed as \cite{bertsimas1997introduction}
\begin{displaymath}
\delta^l = \max \{\delta^u + \Phi\omega_i^*, 0\}
\end{displaymath}
where $\omega_i^*$ is the optimal solution to the PP, and $\Phi \geq \sum_{1 \leq i \leq \left\vert{\mathcal{I}'}\right\vert} f_i$ holds for the optimal solution to the RMP \cite{bertsimas1997introduction}. Since a feasible solution must satisfy $\sum_{1 \leq i \leq \left\vert{\mathcal{I}'}\right\vert} f_i \leq 1$, we set $\Phi = 1$. Lastly, a feasible solution should be greater than zero.

\begin{algorithm}
\caption{$\epsilon$-Bounded Approximation Algorithm}
\label{alg:alg1}
\begin{algorithmic}
\STATE{Input: Approximation factor $\epsilon$, Initial subset of independent sets $\mathcal{I}'$, $\delta^u = \infty$, $\delta^l = 0$}
\STATE{Output: $\delta^u, \delta^l, f_i^{*}, X_{nj}, Y_{nj}^{k}$}
\WHILE{$\frac{\delta^l}{\delta^u} < \frac{1}{1 + \epsilon}$ and $\beta_{i}^{*} > 1$}
\STATE{Solve RMP to obtain its optimal solution $\delta^u$ and the dual optimal solution $\lambda_{nk}^{c}$}
\STATE{Using $\lambda_{nk}^{c}$, solve PP to generate an independent set $I_i$ and obtain $\beta_i^*$}
\STATE{Update $\mathcal{I}' = \mathcal{I}' \cup I_i$}
\STATE{$\omega_i^* = 1 - \beta_i^*$}
\STATE{}
\STATE{$\delta^l = \max\{\delta^u + \Phi\omega_i^*, 0\}$}
\ENDWHILE
\IF{$\delta^u \leq 1$}
\STATE{Demand can be supported}
\STATE{Cache the files according to $X_{nj}$}
\STATE{Route the files according to $Y_{nj}^{k}$}
\ELSIF{$\delta^u > 1+\epsilon$ or $\delta^l > 1$}
\STATE{Demand cannot be supported}
\ELSE
\STATE{set $\epsilon = 0$ to see if demand can be supported}
\ENDIF
\end{algorithmic}
\end{algorithm}

\section{Practical Issues}\label{sec:practical}
In this section, we discuss various practical aspects related to the real-life implementation of our algorithm.
\subsection{The Caching Process}
One of the outputs of our algorithm is where to store each file. If the algorithm is executed frequently, then changes in the file popularities in successive executions might change the cache assignment. This will cause data to be moved around the network, which is expensive due to the limited backhaul links. One possible solution is to execute the algorithm on a daily basis, where the cache assignment can be carried out during off-peak hours (e.g., during the night). The routing and link scheduling can then be performed when the actual requests start appearing.

\subsection{File Popularity Estimation}\label{subsec:estimation}
The algorithm assumes perfect knowledge of the video file user requests and the video file popularity distribution at every small-cell cache. Based on this, the algorithm outputs the caching, routing, and link scheduling assignments. However, this knowledge is not available beforehand and thus file popularity estimation has to be performed. If the estimation were correct, then we could schedule each independent set on a round-robin basis, such that the fraction of time the $i$-th independent set is scheduled is equal to $f_i$. However, if the estimation is wrong, we can rerun the algorithm based on a fixed caching assignment to obtain routing and link scheduling assignments. In this case, the $X$ variables in the original formulation will become constant, and the formulation to be solved will be the following:

\begin{displaymath} \delta^* = \min \sum_{1 \leq i \leq \left\vert{\mathcal{I}}\right\vert} f_i \label{eqn:basic_formulation_file_popularity_estimation}
\end{displaymath}

\begin{align}
&\text{s.t.~} \sum_{n} Y_{nj}^{k} \geq 1_{\{\alpha_{kj} > 0\}} \quad &\forall k, j \label{const15}\\
& Y_{nj}^{k} \leq X_{nj} \quad &\forall n, k, j \label{const16}\\
& \sum_{j} Y_{nj}^{k}\alpha_{kj}S_{j} \leq \sum_{c}Z_{nk}^{c} \quad &\forall n, k \label{const17}\\
& Z_{nk}^{c} \leq \sum_{1 \leq i \leq \left\vert{\mathcal{I}}\right\vert}f_i\hat{c}_{nk}^{c}(I_i) \quad &\forall n, k, c \label{const18}\\
& \sum_{1 \leq i \leq \left\vert{\mathcal{I}}\right\vert} f_i \leq 1 \label{const19}
\end{align}

Moreover, the observed user request patterns can be stored along with the algorithm's output to avoid rerunning the algorithm when the same pattern is observed again.

\subsection {Algorithm Implementation}
The execution of Algorithm~\ref{alg:alg1} is carried out by the Mobility Management Entity (MME) of the cellular network, which acts as a centralized controller \cite{3gpp.36.300}. Different architectures for cellular networks proposed in the literature, e.g., Software Defined Networks (SDN) \cite{li2012toward}, MobileFlow \cite{pentikousis2013mobileflow}, MOCA \cite{banerjee2013moca}, and SoftCell \cite{jin2013softcell}, have taken advantage of this centralized controller. Our algorithm is executed by the MME that has access to all the necessary information such as the network topology, the users' locations, and the files cached at each small base station. Based on this information, the MME runs the algorithm and decides at which small base station to cache a certain file, which base station is used to satisfy a user's demand, and which links are to be active at the same time (i.e., which independent set should be active).

\subsection{File Popularity Dynamics}
The file popularity at each small base station can be dynamic due to change in user interests over time or their mobility, which affects the number of users served by a small base station, as the users arrive to or depart from the small cell served by the base station. If the file popularity is varying slowly, our proposed solution can be executed whenever such a change occurs to obtain a caching, routing, and link scheduling assignment that maximizes the throughput. On the other hand, if the file popularity is changing quickly, then our algorithm can be executed based on a fixed caching assignment, as explained in Section~\ref{subsec:estimation}, in order to obtain a feasible routing and link scheduling assignment whenever the content popularities change.

\section{Complexity Analysis}\label{sec:performance}
In this section, we analyze the complexity of Algorithm~\ref{alg:alg1} compared to solving the original problem directly. Let $Q$ denote the total number of communication links in the system, then at most $Q = \vert\mathcal{N}\vert \vert\mathcal{K}\vert \vert\mathcal{C}\vert$, and the total number of independent sets is at most $2^Q$. Since there is a variable $f_i$ associated with the $i$-th independent set, the total number of $f_i$ variables is at most $2^Q$. Thus, the total number of variables in the original problem is $2^Q + \vert\mathcal{N}(\vert\mathcal{J}\vert + \vert\mathcal{J}\vert \vert \vert\mathcal{K}\vert) + \vert\mathcal{K}\vert\vert\mathcal{C}\vert$. Thus the complexity of the original problem is $\mathcal{O}(2^Q) = \mathcal{O}(2^{\vert\mathcal{N}\vert \vert\mathcal{K}\vert \vert\mathcal{C}\vert})$.

To analyze the complexity of our algorithm, we consider the complexities of the RMP and PP separately. Note that our algorithm iterates between the RMP and PP, adding a new variable to RMP with each iteration. Initially, the number of variables in RMP is $\mathcal{Q} = Q + \vert\mathcal{N}(\vert\mathcal{J}\vert + \vert\mathcal{J}\vert \vert \vert\mathcal{K}\vert) + \vert\mathcal{K}\vert\vert\mathcal{C}\vert$, since we start with an initial subset of independent sets, where each independent set containts a single communication link. Therefore, in the $l$-th iteration, the total number of variables in RMP is $\mathcal{Q} = Q + l + \vert\mathcal{N}(\vert\mathcal{J}\vert + \vert\mathcal{J}\vert \vert \vert\mathcal{K}\vert) + \vert\mathcal{K}\vert\vert\mathcal{C}\vert$. RMP is a linear program which can be solved using the polynomial interior-point algorithm \cite{ye1991n}, which has a third degree polynomial complexity in terms of the number of variables. If the total number of iterations is $L$, then the complexity of the RMP is $\mathcal{O}(\sum_{l = 1}^{L} \mathcal{Q}^3) = \mathcal{O}(Q^3 + L^4)$.

To analyze the complexity of the PP, note that the total number of variables in the PP is always $Q$, since the PP decides whether to include a communication link in the generated independent set or not. Solving the PP directly using a solver (e.g., CPLEX, which uses a branch and bound algorithm to solve a mixed-integer linear program \cite{cplex}) will result in $\mathcal{O}(2^Q)$ complexity in the worst case. Other methods to solve the PP is to use Sequential Fixing (SF). Sequential Fixing solves a relaxed version of the PP (where the integer variables are relaxed to fractional variables), and then fixes the value of the variables one at a time, and thus the variables can be determined in at most $Q$ iterations of Sequential Fixing. Therefore, the complexity of solving the PP is $\mathcal{O}(L(Q^2)^3)$.

\section{Experimentation}\label{sec:simulation}
\subsection{Setting}
We consider a circular cellular network with a radius of $400m$. The macro base station is located at the center, while the small-cell base stations and the users are independently and uniformly distributed over the network area. There are 10 available secondary channels, each with a bandwidth of $400$kHz, where each small base station can use 5 secondary channels selected randomly and independently of the other small base stations. Each user can also use 5 secondary channels selected randomly and independently of the other users. There is one primary (basic) channel of 1MHz bandwidth that is exclusively used by the macro-cell base station.

The files requested by the users have an average size of 400MB. The popularity of each file is distributed according to a Zipf-distribution with parameter $\zeta = 0.8$\cite{breslau1999web}, where the popularity of a file
of rank $m$ is given as $\frac{1/m^\zeta}{\sum_{j = 1}^{\vert\mathcal{J}\vert}1/j^\zeta}$. Lastly, to stop the algorithm, we set $\epsilon = 0.03$. The simulation parameters are summarized in Table~\ref{tab:tab1}. We compare our system to the Femtocaching system proposed in \cite{golrezaei2012femtocaching}.

\begin{table}[htb]
\caption {Simulation Parameters} \label{tab:tab1}
\begin{center}
    \begin{tabular}{|l|l|}
    \hline
Parameter & Value\\ \hline
Cell Radius & 400 m\\\hline
Primary Channel Bandwidth & 1 MHz\\\hline
Number of Secondary Channels & 10\\\hline
Secondary Channel Bandwidth & 400 kHz\\\hline
Number of Secondary Channels \\available at the SBS & 5\\\hline
Number of Secondary Channels \\available at the users & 5\\\hline
Average File Size & 400 MB\\\hline
File Popularity & Zipf distribution (parameter $\zeta = 0.8$)\\\hline
Average Cache Size & 4 GB (unless stated otherwise)\\\hline
Number of files & 200 (unless stated otherwise)\\\hline
Number of Users & 200 (unless stated otherwise)\\\hline
Number of SBS & 14 (unless stated otherwise)\\\hline
Transmission Range & 100m (unless stated otherwise)\\\hline
Interference Range & Twice the transmission range\\\hline
$\epsilon$ & 0.03\\\hline
\end{tabular}
\end{center}
\end{table}

\subsection{Results}
\subsubsection{Throughput vs. Cache Size}
Figure~\ref{fig:Cache1} shows the results of our simulation experiments measuring the achieved average user rate (Mbps), as the cache size at each SBS is varied. In this simulation, the number of files considered is 200, the total number of users in the cellular network is set to 200, the number of small base stations is set to 14, the transmission range of the small base stations is set to 100m, the interference range is set to twice the transmission range, and all SBSs have the same cache size, which is varied between $0.8$GB to $8$GB. As can be seen, as the cache size increases, the average user throughput increases. This is because the SBS can store more files in its cache and can serve more users without the need to communicate with the macro base station. This will create more opportunities for simultaneous transmissions from different SBSs to satisfy more user requests. Moreover, our system can achieve around 40\% gain over Femtocaching \cite{golrezaei2012femtocaching}, as our system allows for different SBSs to use the same channel simultaneously, given that they are not interfering with each other.

\begin{figure*}
\centering
\begin{minipage}{.28\linewidth}
\includegraphics[width=\linewidth]{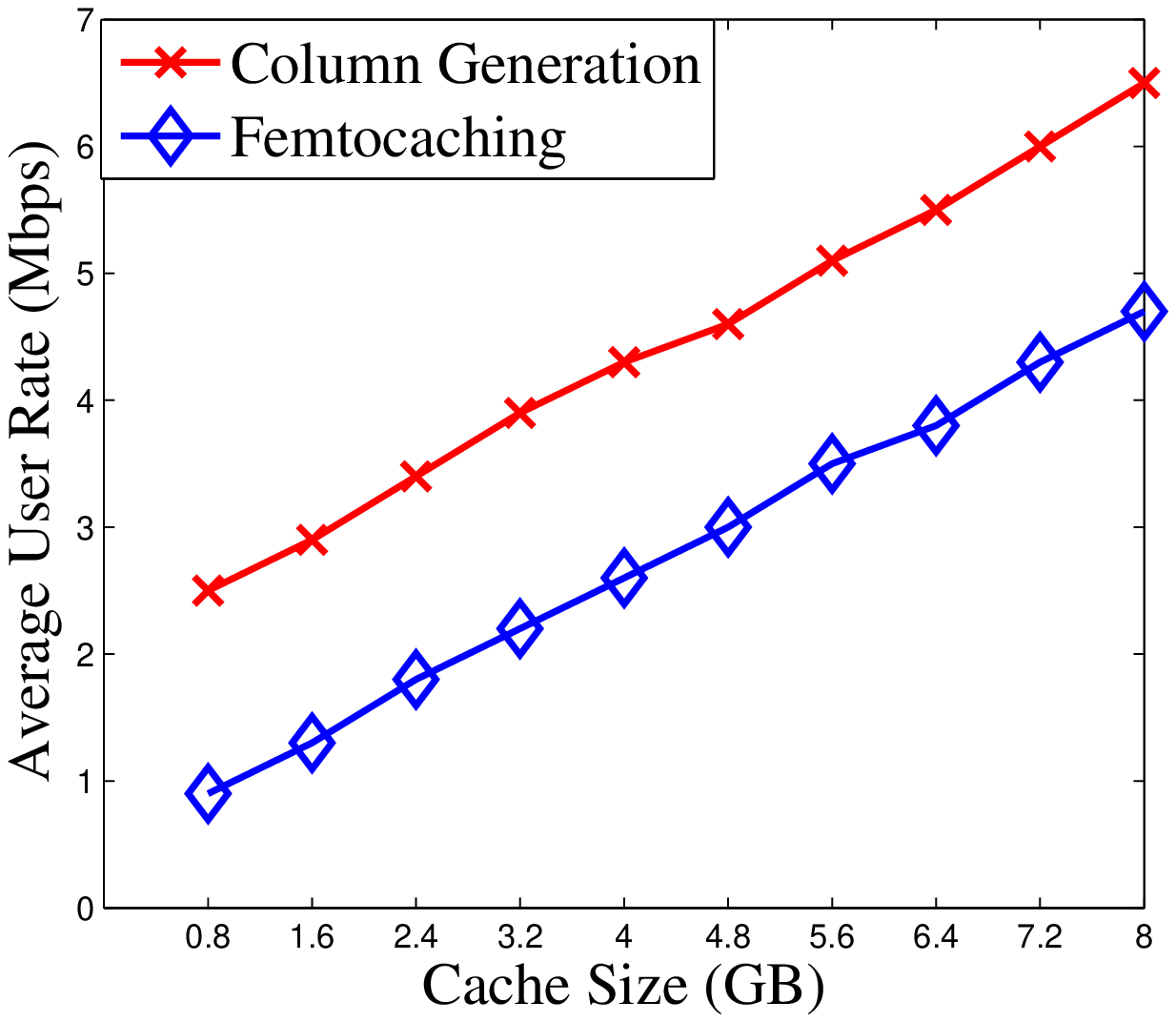}
\captionof{figure}{Throughput vs. cache size (homogeneous case).}
\label{fig:Cache1}
\end{minipage}
\hspace{.05\linewidth}
\begin{minipage}{.28\linewidth}
\centering
\includegraphics[width=\linewidth]{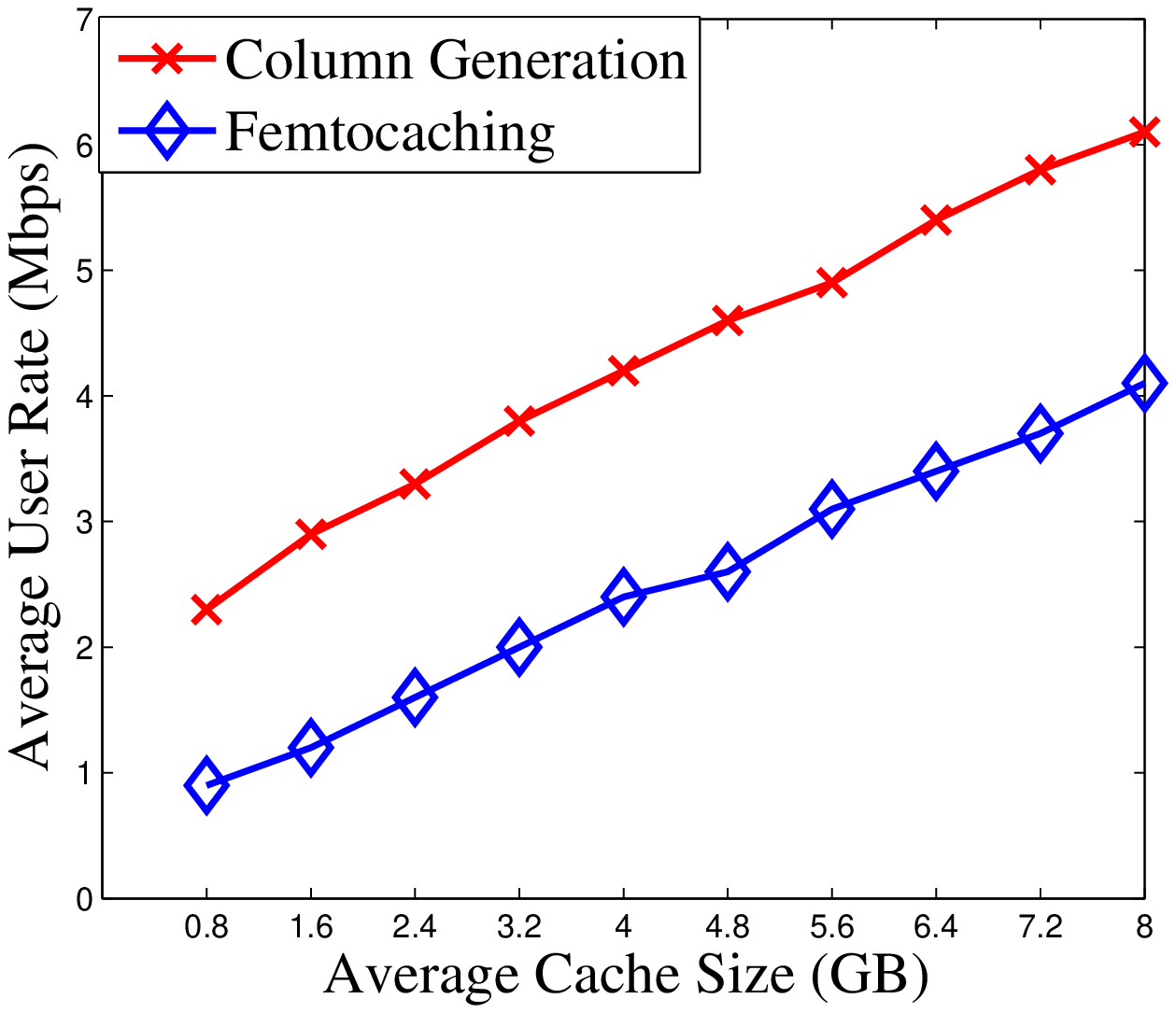}
\captionof{figure}{Throughput vs. cache size (heterogeneous case).}
\label{fig:Cache2}
\end{minipage}
\hspace{.05\linewidth}
\begin{minipage}{.28\linewidth}
\centering
\includegraphics[width=\linewidth]{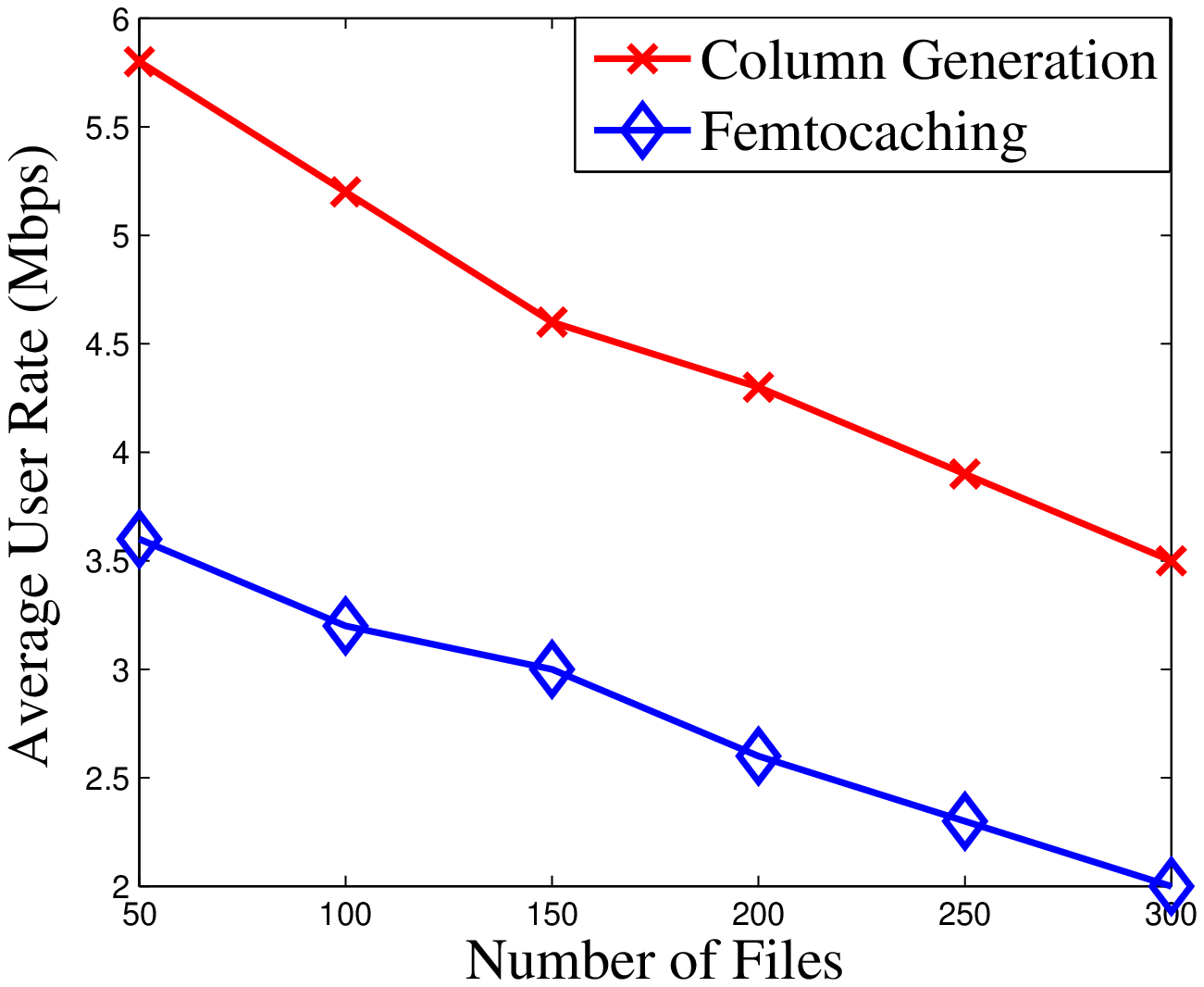}
\captionof{figure}{Throughput vs. number of files.}
\label{fig:files}
\end{minipage}
\end{figure*}

Figure~\ref{fig:Cache2} is similar to Figure~\ref{fig:Cache1} except that the cache size of each SBS is different, but the average cache size is the same as in Figure~\ref{fig:Cache1}. The same trend can be observed in the figure with the possibility of achieving up to 33\% more gain over Femtocaching \cite{golrezaei2012femtocaching} depending on the heterogeneity of the cache size of the SBSs.

\subsubsection{Throughput vs. Number of Files}
In this simulation, the average cache size is set to 4 GB, the total number of users in the cellular network is set to 200, the number of small base stations is set to 14, the transmission range of the small base stations is set to 100m, and the interference range is set to twice the transmission range. Figure~\ref{fig:files} shows the same performance metric, versus the number of files the users can select from. We can see that the achieved average user throughput decreases, as the video file heterogeneity increases. That is because the volume of data transmitted over the same channel necessarily increases in this case, and therefore, it takes increasingly longer to meet all the video file user requests. We also note that our system outperforms Femtocaching \cite{golrezaei2012femtocaching} by 39\% when the number of files is small and by 42\% when the number of files is large. This is due to the opportunities for simultaneous transmission created by our system.

\subsubsection{Throughput vs. Number of Users}
In this simulation, the number of files considered is 200, the average cache size is set to 4 GB, the number of small base stations is set to 14, the transmission range of the small base stations is set to 100m, and the interference range is set to twice the transmission range. Figure~\ref{fig:users} shows the achieved average user throughput, as the number of users is varied. As noted in the figure, as the number of user increases, the average user throughput decreases. This is because when there are more users in the system, there is a higher chance of communication links interfering with each other, which lowers the average user throughput. We also note that our system improves the throughput by 30\% over Femtocaching \cite{golrezaei2012femtocaching} when the number of users is low. That is because in our system, different SBSs can use the same channel for communication, as long as they do not interfere with each other.

\begin{figure*}
\centering
\begin{minipage}{.28\linewidth}
\includegraphics[width=\linewidth]{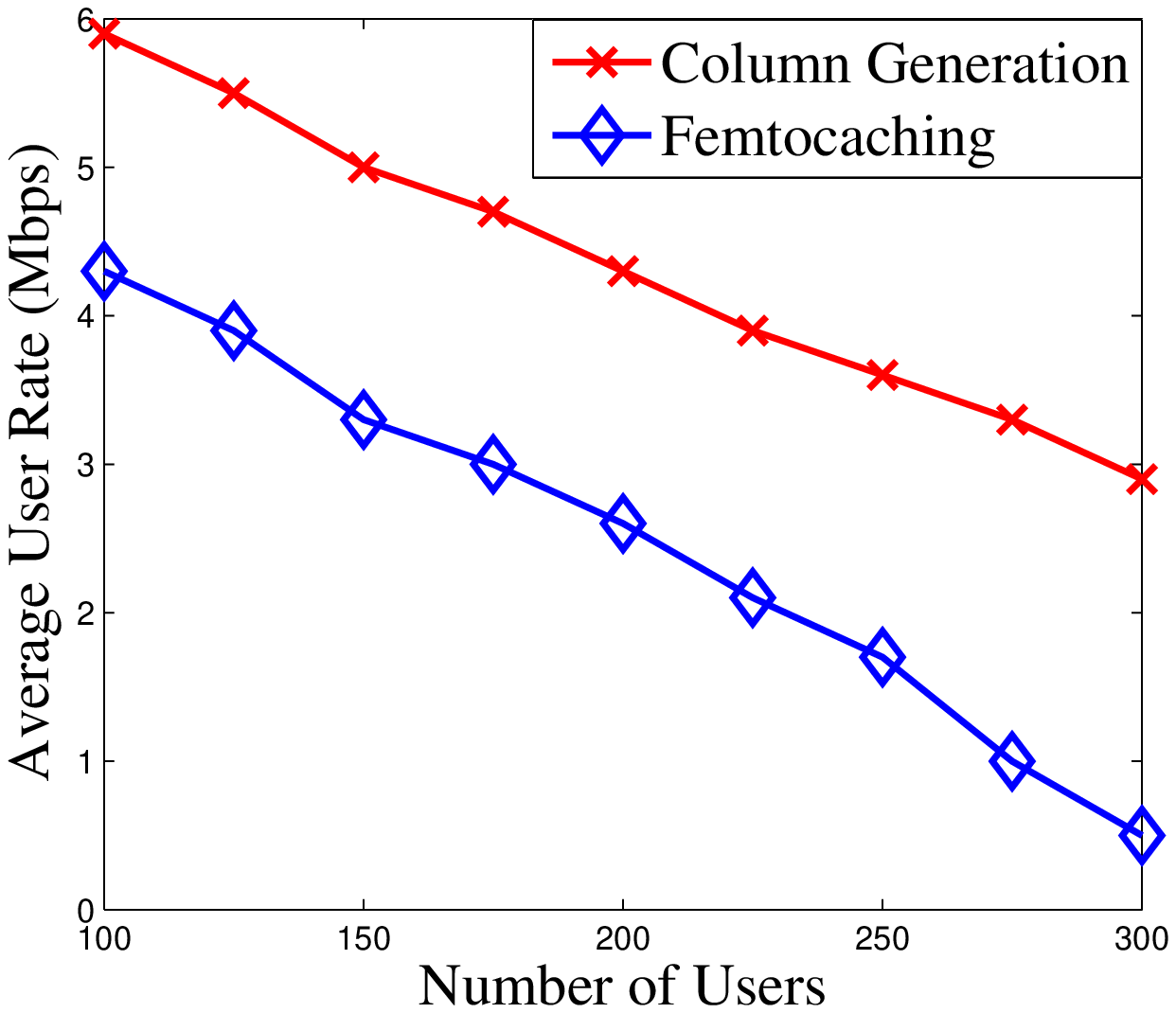}
\captionof{figure}{Throughput vs. number of users.}
\label{fig:users}
\end{minipage}
\hspace{.05\linewidth}
\begin{minipage}{.28\linewidth}
\centering
\includegraphics[width=\linewidth]{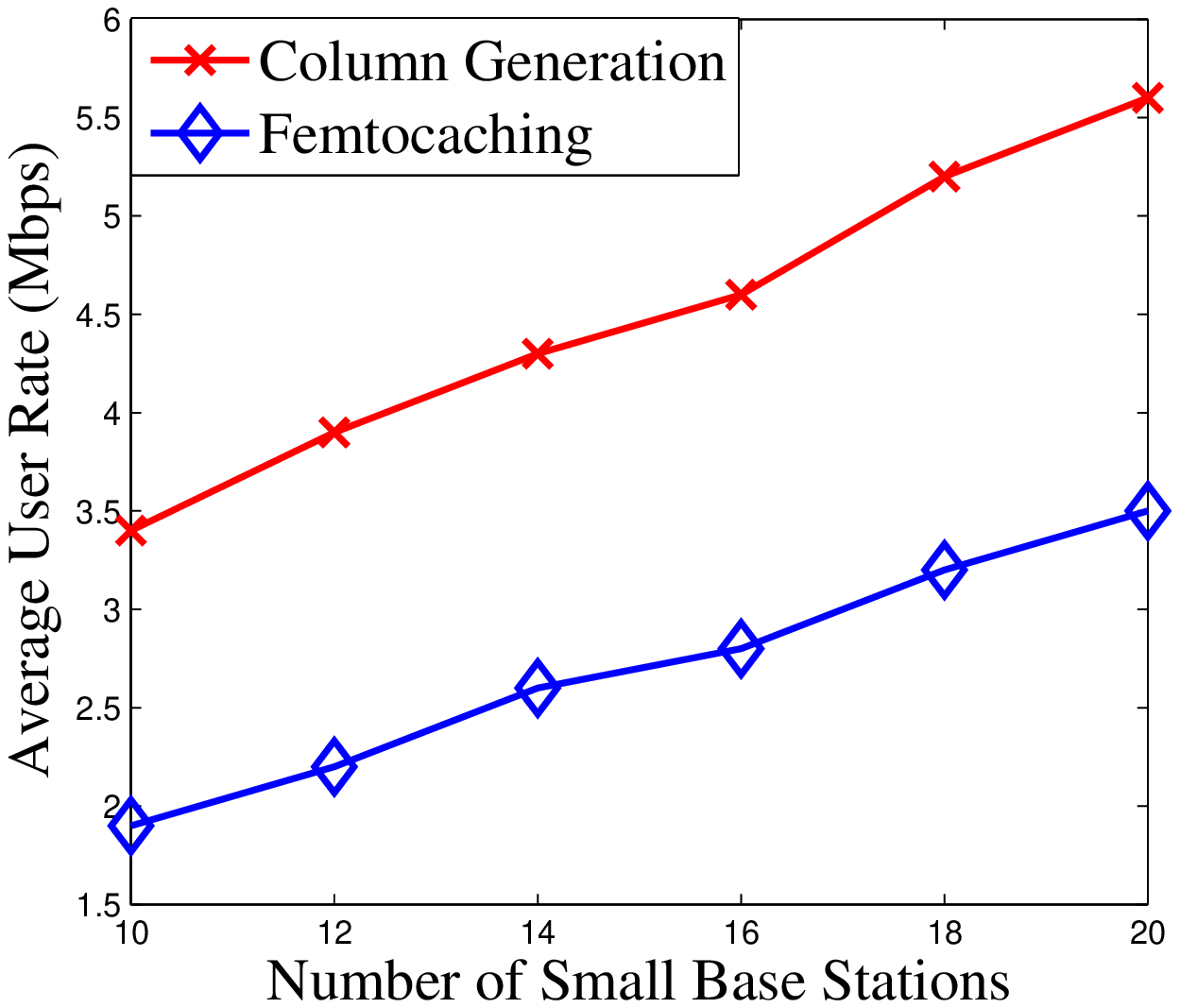}
\captionof{figure}{Throughput vs. number of small base stations.}
\label{fig:sbs}
\end{minipage}
\hspace{.05\linewidth}
\begin{minipage}{.28\linewidth}
\centering
\includegraphics[width=\linewidth]{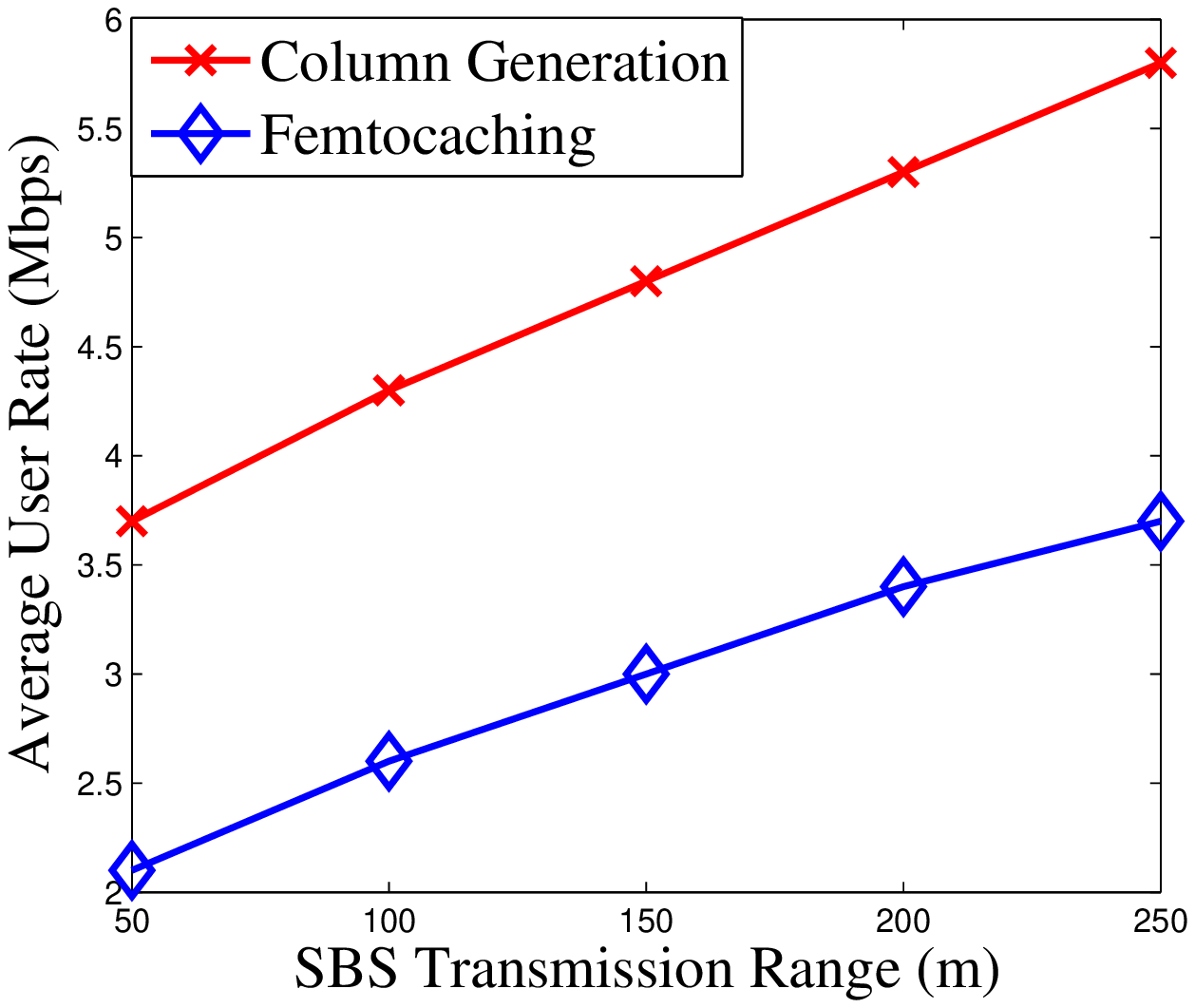}
\captionof{figure}{Throughput vs. Radius of Small Base Stations.}
\label{fig:radius}
\end{minipage}
\end{figure*}

\subsubsection{Throughput vs. Number of Small Base Stations}
In this simulation, the number of files considered is 200, the average cache size is set to 4 GB, the total number of users in the cellular network is set to 200, the transmission range of the small base stations is set to 100m, and the interference range is set to twice the transmission range. Figure~\ref{fig:sbs} shows the results of the simulation when the number of small base stations is varied. As can be seen from the figure, as the number of small base stations increases, the average user throughput increases. That is because each small-cell base station will serve a lower number of users in this case, and this will create more opportunities for simultaneous transmissions. Moreover, as the number of small base stations increases, there is a higher chance that a file is served by a small base station instead of the macro base station, since serving a request from the macro base station on a certain channel will forbid everyone else from using that same channel. As noted from Figure~\ref{fig:sbs}, relative to Femtocaching, our system improves the throughput by 40-42\%, as it is able to exploit the same channel for simultaneous transmissions at multiple small-cells, serving different user requests at the same time.

\subsubsection{Throughput vs. Radius of Small Base Stations}
In this simulation, the number of files considered is 200, the average cache size is set to 4 GB, the number of small base stations is set to 14, and the total number of users in the cellular network is set to 200. Figure~\ref{fig:radius} examines the impact of the transmission range of the small base stations. The interference range is set to twice the transmission range. We can see that the average throughput of the users increases as the transmission range increases. This is because an increasing number of users fall within the transmission range of the small base stations thereby and can be served by them instead of the macro base station, as any channel used by the macro base station for transmission will not be used by the small base station, at the same time. Moreover, Figure~\ref{fig:radius} shows that our proposed solution achieves a 34-46\% additional throughput gain over Femtocaching, since our solution can exploit simultaneous transmissions using the same channel at multiple small base stations.


\section{Video Application Quality Implications}
The average user transmission rate gains enabled by our system over the state-of-the-art will map to equivalent gains in video application quality that in turn will enhance the user experience considerably. Concretely, the 34-46\% higher data rate, demonstrated in our experiments, can support the delivery of video content featuring 2-4 dB higher video quality (Y-PSNR), for the typical online video spatial resolutions delivered today \cite{Chakareski:15}. Alternatively, the higher data rate can support the delivery of higher resolution videos, featuring the same Y-PSNR video quality, again enhancing the user experience. Increasing the temporal resolution (frame rate) of the delivered video content can be yet another benefit supported by the higher network throughput enabled by our optimization framework. Finally, the latter can also be utilized to reduce the start-up (play-out) delay of the video application, by delivering the video content faster to the user. As noted in user studies, this factor is the most-critical performance metrics affecting the quality of experience of the users \cite{MatosCMSCK:12}.

We anticipate that further gains in performance can be achieved if the content is represented in a scalable coding format, e.g., SVC \cite{ITU:05,SchwarzMW:07}, as further trade-offs between caching space and user video quality can be explored thereby. This is, however, beyond the scope of the present paper and can represent a prospectively fruitful follow-up investigation. 

\section{Conclusion}\label{sec:conclusion}
We studied the problem of joint caching, routing, and channel assignment for video delivery over coordinated multi-cell systems of the future Internet. We formulated a novel optimization framework based on the column generation method that is used to solve large-scale linear programming problems. To control the complexity of the optimization, we have formulated it such that it features a restricted master subproblem and a pricing subproblem. The former uses only a subset of the original variables, while the latter is used to determine whether another original variable should be introduced into the master problem formulation, if that leads to lower objective function value. Our formulation integrates the effect of interference by accounting for the conflict graph between the wireless transmission links comprising the system. To further control computational complexity, we have designed a $(1\pm\epsilon)$ approximation algorithm that computes the optimal solution at lower complexity, and we have proved its approximation bounds. Our simulation experiments demonstrate that our system can deliver consistent 34-46\% gains in network throughput over the state-of-the-art Femtocaching system, over a wide range of system parameters and network conditions. These advances will result in equivalent gains in video application quality that in turn will enhance the quality of experience of the user.

\bibliographystyle{IEEEtran}
\bibliography{biblo2}

\begin{thebibliography}{10}
\providecommand{\url}[1]{#1}
\csname url@samestyle\endcsname
\providecommand{\newblock}{\relax}
\providecommand{\bibinfo}[2]{#2}
\providecommand{\BIBentrySTDinterwordspacing}{\spaceskip=0pt\relax}
\providecommand{\BIBentryALTinterwordstretchfactor}{4}
\providecommand{\BIBentryALTinterwordspacing}{\spaceskip=\fontdimen2\font plus
\BIBentryALTinterwordstretchfactor\fontdimen3\font minus
  \fontdimen4\font\relax}
\providecommand{\BIBforeignlanguage}[2]{{%
\expandafter\ifx\csname l@#1\endcsname\relax
\typeout{** WARNING: IEEEtran.bst: No hyphenation pattern has been}%
\typeout{** loaded for the language `#1'. Using the pattern for}%
\typeout{** the default language instead.}%
\else
\language=\csname l@#1\endcsname
\fi
#2}}
\providecommand{\BIBdecl}{\relax}
\BIBdecl

\bibitem{Cisco:14nourl}
``Global mobile data traffic forecast update, 2013--2018,'' in \emph{Cisco
  Visual Networking Index}.\hskip 1em plus 0.5em minus 0.4em\relax Cisco Inc.,
  Feb. 2014.

\bibitem{golrezaei2012femtocaching}
N.~Golrezaei, K.~Shanmugam, A.~G. Dimakis, A.~F. Molisch, and G.~Caire,
  ``Femtocaching: Wireless video content delivery through distributed caching
  helpers,'' in \emph{IEEE INFOCOM Proceedings}, 2012, pp. 1107--1115.

\bibitem{blasco2014learning}
P.~Blasco and D.~Gunduz, ``Learning-based optimization of cache content in a
  small cell base station,'' in \emph{IEEE ICC}, 2014, pp. 1897--1903.

\bibitem{khreishah2015collaborative}
A.~Khreishah and J.~Chakareski, ``Collaborative caching for
  multicell-coordinated systems,'' in \emph{IEEE INFOCOM CNTCV Workshop}, 2015.

\bibitem{erman2011cache}
J.~Erman, A.~Gerber, M.~Hajiaghayi, D.~Pei, S.~Sen, and O.~Spatscheck, ``To
  cache or not to cache: The {3G} case,'' \emph{Internet Computing}, vol.~15,
  no.~2, pp. 27--34, 2011.

\bibitem{ahlehagh2012video}
H.~Ahlehagh and S.~Dey, ``Video caching in radio access network: impact on
  delay and capacity,'' in \emph{IEEE WCNC}, 2012, pp. 2276--2281.

\bibitem{bastug2014cache}
E.~Bastug, M.~Bennis, and M.~Debbah, ``Cache-enabled small cell networks:
  Modeling and tradeoffs,'' in \emph{IEEE (ISWCS)}, 2014, pp. 649--653.

\bibitem{ostovari2013cache}
P.~Ostovari, A.~Khreishah, and J.~Wu, ``Cache content placement using
  triangular network coding,'' in \emph{IEEE WCNC}, 2013, pp. 1375--1380.

\bibitem{karamchandani2014hierarchical}
N.~Karamchandani, U.~Niesen, M.~A. Maddah-Ali, and S.~Diggavi, ``Hierarchical
  coded caching,'' in \emph{IEEE ISIT}, 2014, pp. 2142--2146.

\bibitem{poularakis2014approximation}
K.~Poularakis, G.~Iosifidis, and L.~Tassiulas, ``Approximation algorithms for
  mobile data caching in small cell networks,'' \emph{IEEE Transactions on
  Communications}, vol.~62, no.~10, pp. 3665--3677, 2014.

\bibitem{jain2005impact}
K.~Jain, J.~Padhye, V.~N. Padmanabhan, and L.~Qiu, ``Impact of interference on
  multi-hop wireless network performance,'' \emph{Wireless networks}, vol.~11,
  no.~4, pp. 471--487, 2005.

\bibitem{ho2008network}
T.~Ho and D.~Lun, \emph{Network coding: an introduction}.\hskip 1em plus 0.5em
  minus 0.4em\relax Cambridge University Press, 2008.

\bibitem{ho2006random}
T.~Ho, M.~M{\'e}dard, R.~Koetter, D.~R. Karger, M.~Effros, J.~Shi, and
  B.~Leong, ``A random linear network coding approach to multicast,''
  \emph{IEEE Transactions on Information Theory}, vol.~52, no.~10, pp.
  4413--4430, 2006.

\bibitem{gupta2000capacity}
P.~Gupta and P.~R. Kumar, ``The capacity of wireless networks,'' \emph{IEEE
  Transactions on Information Theory}, vol.~46, no.~2, pp. 388--404, 2000.

\bibitem{bertsimas1997introduction}
D.~Bertsimas and J.~N. Tsitsiklis, \emph{Introduction to linear
  optimization}.\hskip 1em plus 0.5em minus 0.4em\relax Athena Scientific
  Belmont, MA, 1997, vol.~6.

\bibitem{lasdon2013optimization}
L.~S. Lasdon, \emph{Optimization theory for large systems}.\hskip 1em plus
  0.5em minus 0.4em\relax Courier Corporation, 2013.

\bibitem{3gpp.36.300}
\BIBentryALTinterwordspacing
3GPP, ``{Evolved Universal Terrestrial Radio Access (E-UTRA) and Evolved
  Universal Terrestrial Radio Access (E-UTRAN); Overall description; Stage
  2},'' {3rd Generation Partnership Project (3GPP)}, TS {36.300}, Sep. 2008.
  [Online]. Available: \url{http://www.3gpp.org/ftp/Specs/html-info/36300.htm}
\BIBentrySTDinterwordspacing

\bibitem{li2012toward}
L.~E. Li, Z.~M. Mao, and J.~Rexford, ``Toward software-defined cellular
  networks,'' in \emph{IEEE EWSDN}, 2012, pp. 7--12.

\bibitem{pentikousis2013mobileflow}
K.~Pentikousis, Y.~Wang, and W.~Hu, ``Mobileflow: Toward software-defined
  mobile networks,'' \emph{IEEE Communications Magazine}, vol.~51, no.~7, pp.
  44--53, 2013.

\bibitem{banerjee2013moca}
A.~Banerjee, X.~Chen, J.~Erman, V.~Gopalakrishnan, S.~Lee, and J.~Van
  Der~Merwe, ``Moca: a lightweight mobile cloud offloading architecture,'' in
  \emph{Proceedings of the eighth ACM international workshop on Mobility in the
  evolving internet architecture}.\hskip 1em plus 0.5em minus 0.4em\relax ACM,
  2013, pp. 11--16.

\bibitem{jin2013softcell}
X.~Jin, L.~E. Li, L.~Vanbever, and J.~Rexford, ``Softcell: Scalable and
  flexible cellular core network architecture,'' in \emph{Proceedings of the
  ninth ACM conference on Emerging networking experiments and
  technologies}.\hskip 1em plus 0.5em minus 0.4em\relax ACM, 2013, pp.
  163--174.

\bibitem{ye1991n}
Y.~Ye, ``An {$O(n^3L)$} potential reduction algorithm for linear programming,''
  \emph{Mathematical programming}, vol.~50, no. 1-3, pp. 239--258, 1991.

\bibitem{cplex}
\BIBentryALTinterwordspacing
``{IBM ILOG CPLEX} optimization studio.'' [Online]. Available:
  \url{http://www-01.ibm.com/software/commerce/optimization/cplex-optimizer/}
\BIBentrySTDinterwordspacing

\bibitem{breslau1999web}
L.~Breslau, P.~Cao, L.~Fan, G.~Phillips, and S.~Shenker, ``Web caching and
  zipf-like distributions: Evidence and implications,'' in \emph{IEEE INFOCOM},
  vol.~1, 1999, pp. 126--134.

\bibitem{Chakareski:15}
J.~Chakareski, ``Uplink scheduling of visual sensors: When view popularity
  matters,'' \emph{IEEE Trans. Communications}, vol.~2, no.~63, pp. 510--519,
  Feb. 2015.

\bibitem{MatosCMSCK:12}
R.~Matos, N.~Coutinho, C.~Marques, S.~Sargento, J.~Chakareski, and A.~Kassler,
  ``Quality of experience based routing in multi-service wireless mesh
  networks,'' in \emph{Workshop on Realizing Advanced Video Optimized Wireless
  Networks at the Int'l Conf. on Communications}.\hskip 1em plus 0.5em minus
  0.4em\relax Ottawa, Canada: IEEE, Jun. 2012.

\bibitem{ITU:05}
{ITU-T and ISO/IEC JTC 1}, ``Advanced video coding for generic audiovisual
  services, amendment 3: Scalable video coding,'' \emph{Draft {ITU-T}
  Recommendation {H.264 - ISO/IEC 14496-10(AVC)}}, Apr. 2005.

\bibitem{SchwarzMW:07}
H.~Schwarz, D.~Marpe, and T.~Wiegand, ``Overview of the scalable video coding
  extension of the {H.264/AVC} standard,'' \emph{IEEE Trans. Circuits and
  Systems for Video Technology}, vol.~17, no.~9, pp. 1103--1120, Sep. 2007.

\end{thebibliography}

\begin{biography}[{\includegraphics[width=1in,
height=1.25in,clip,keepaspectratio]{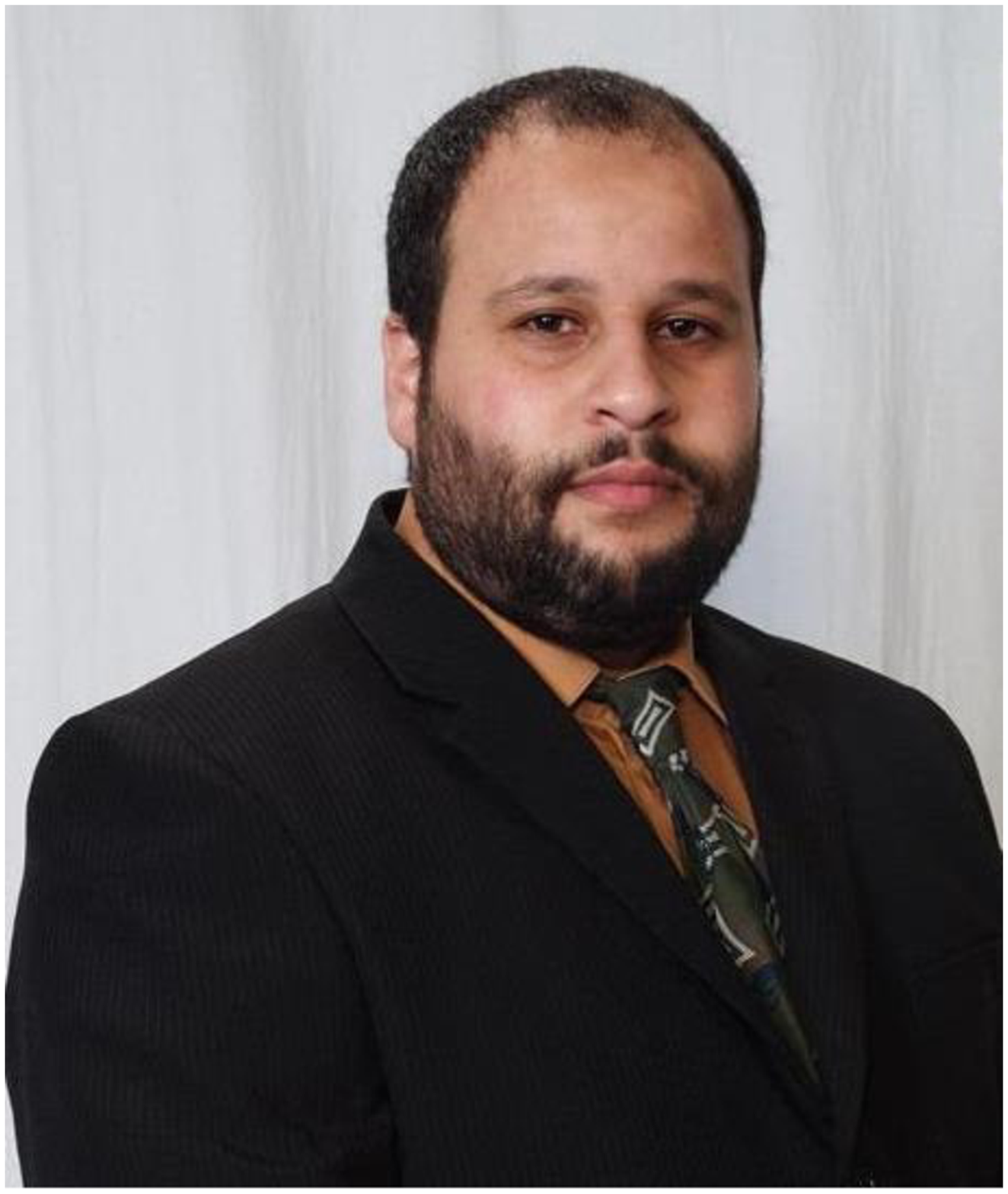}}]{Abdallah Khreishah} is an assistant professor in the Department of Electrical and Computer Engineering at New Jersey Institute of Technology. His research interests fall in the areas of visible-light communication, green networking, network coding, wireless networks, and network security. Dr. Khreishah received his BS degree in computer engineering from Jordan University of Science and Technology in 2004, and his MS and PhD degrees in electrical \& computer engineering from Purdue University in 2006 and 2010. While pursuing his PhD studies, he worked with NEESCOM. He is a Member of the IEEE and the chair of North Jersey IEEE EMBS chapter.
\end{biography}

\begin{biography}[{\includegraphics[width=1in,
height=1.25in,clip,keepaspectratio]{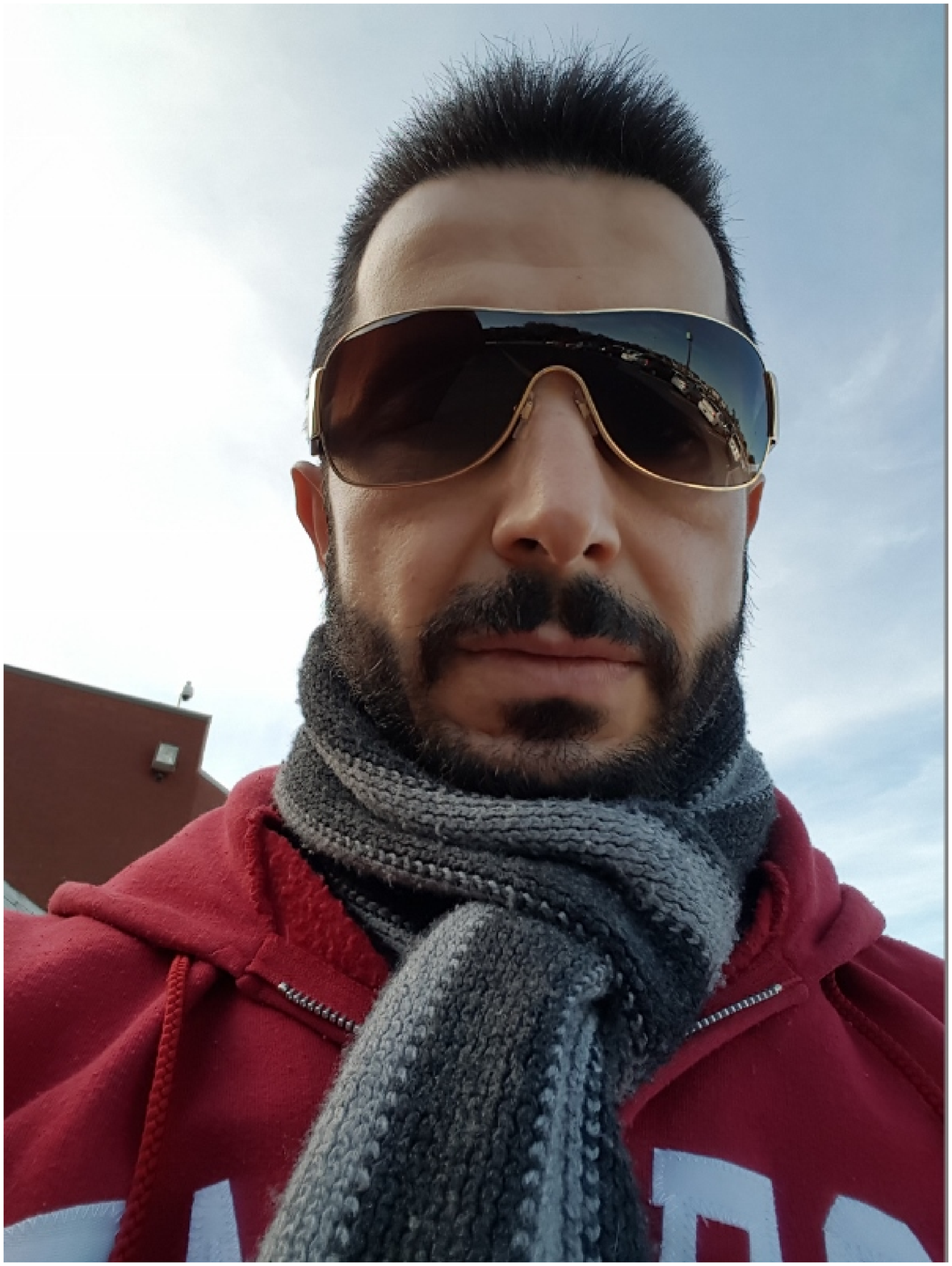}}]{Jacob Chakareski} completed the M.Sc. and Ph.D. degrees in electrical and computer engineering at the Worcester Polytechnic Institute (WPI), Worcester, MA, USA, Rice University, Houston, TX, USA, and Stanford University, Stanford, CA, USA. He is an Assistant Professor of Electrical and Computer
Engineering at the University of Alabama. He was a Senior Scientist at Ecole Polytechnique Federale de Lausanne (EPFL), Lausanne, Switzerland, where he conducted research, supervised students, and lectured. He also held research positions with Microsoft, Hewlett-Packard, and Vidyo, a leading provider of Internet telepresence solutions. Chakareski has authored one monograph, three book chapters, and over 130 international publications, and holds 5 US patents. His current research interests include immersive visual communication, future Internet architectures, graph-based signal and information processing, and social computing. He also pursues ultrasound and interactive 3D video applications in telemedicine, remote sensing, biomedicine, and community-based health care.

Dr. Chakareski is a member of Tau Beta Pi and Eta Kapa Nu. He is a recipient of the Technical University Munich Mobility Fellowship, the University of Edinburgh Chancellor’s Fellowship, and fellowships from the Soros Foundation and the Macedonian Ministry of Science. He received the Texas Instruments Graduate Research Fellowship, the Swiss NSF Ambizione Career Development Award, the Best Student Paper Award at the SPIE VCIP 2004 Conference, and the Best Paper Award of the Stanford Electrical Engineering and Computer Science Research Journal for 2004. He was the Publicity Chair of the Packet Video Workshop 2007 and 2009, and the Workshop on Emerging Technologies in Multimedia Communications and Networking at ICME 2009. He has organized and chaired a special session on telemedicine at MMSP 2009. He was the Technical Program Co-Chair of Packet Video 2012 and the General Co-Chair of the IEEE SPS Seasonal School on Social Media Processing 2012. He was a Guest Editor of the Springer PPNA Journal’s 2013 special issue on P2P-Cloud Systems. He is the Technical Program chair of Packet Video 2016 and the guest editor of the IEEE TCSVT special issue on Mobile Visual Cloud (June 2016). Chakareski is an Advisory Board member of Frame, an innovative cloud computing start-up with a bright future. He is an IEEE Senior Member. For further information, please visit www.jakov.org.
\end{biography}

\begin{biography}[{\includegraphics[width=1in,
height=1.25in,clip,keepaspectratio]{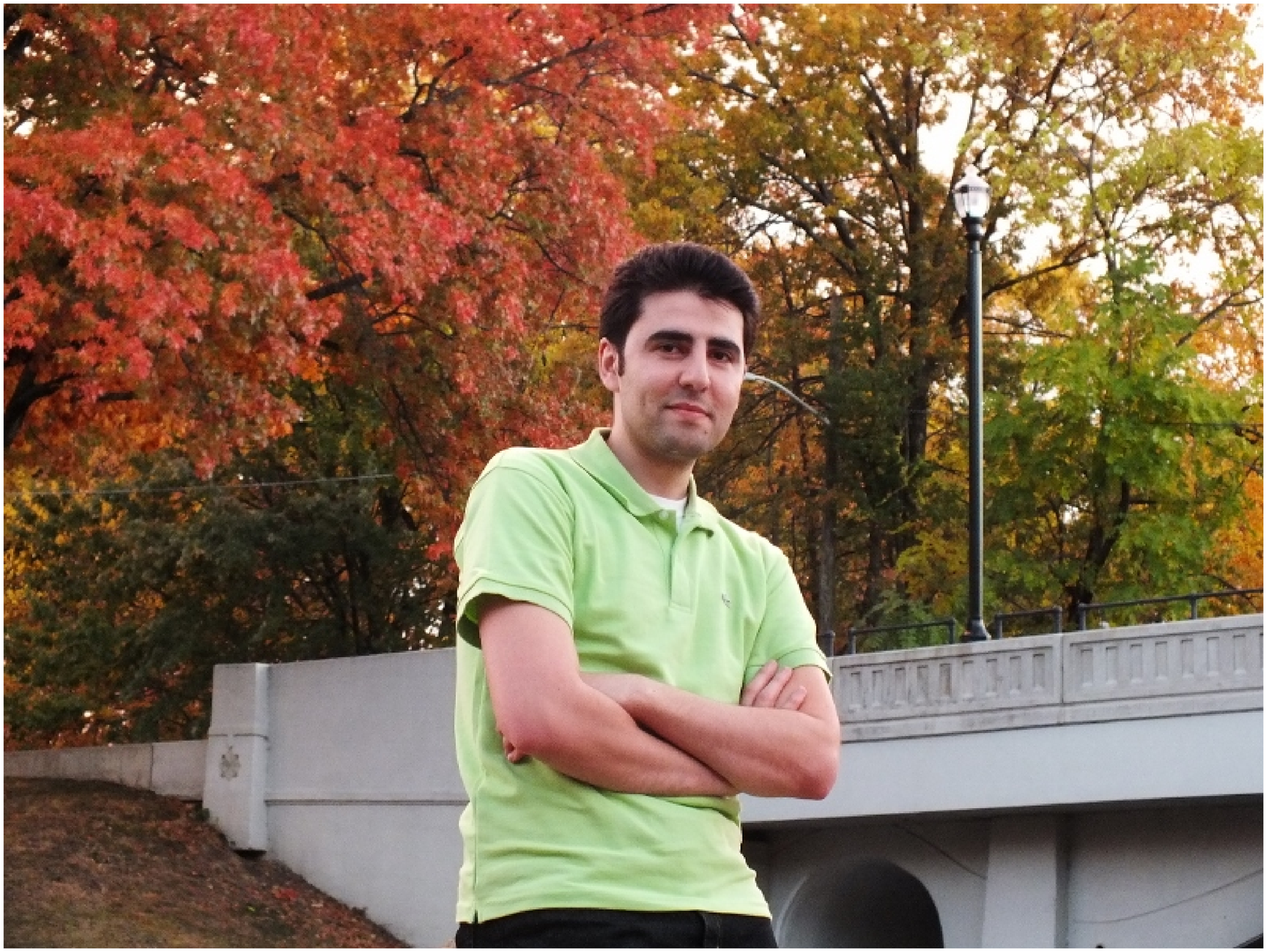}}]{Ammar Gharaibeh} is a PhD student at the ECE department of New Jersey Institute of Technology. He received his M.S. degrees in Computer Engineering from Texas A\& M University in 2009. Prior to that, he received his B.S. degree with honors from Jordan University of Science and Technology in 2006. His research interests spans the areas of wireless networks and network caching.
\end{biography}

\end{document}